\newtheorem{Lemma}{Lemma}
\newtheorem{Theorem}{Theorem}
\newtheorem{Conjecture}{Conjecture}
\newtheorem{Definition}{Definition}
\def\e{\mathbb{ E}}
\newtheorem{Question}{Question}
\def\bit{\bibitem}
\def\p{\partial}
\def\d{{\rm d}}
\let\ifexpand\iffalse
\begin{document}


\title{A Reformulation of Gaussian Completely Monotone Conjecture:  A Hodge Structure on the Fisher Information along Heat Flow  }
%
%
%

\author{Fan Cheng

\thanks{F.  Cheng is with the Department of Computer Science and Engineering,  Shanghai Jiao Tong University,  Shanghai,  China.  Email: chengfan85@gmail.com.}


}

\maketitle

\begin{abstract}
In the past decade, J. Huh solved several long-standing open problems on log-concave sequences in combinatorics.
The ground-breaking techniques developed in those work are from algebraic geometry:  ``We believe that behind any log-concave sequence
that appears in nature there is such a Hodge structure responsible for the log-concavity''.

A function is called completely monotone if its derivatives alternate in signs; e.g., $e^{-t}$.
A fundamental conjecture in mathematical physics and Shannon information theory is on the complete monotonicity of Gaussian distribution (GCMC), which states that $I(X+Z_t)$\footnote{The probability density function of $X+Z_t$ is called ``heat flow'' in mathematical physics.} is completely monotone in $t$, where $I$ is Fisher information, random variables $X$ and $Z_t$ are independent and $Z_t\sim\mathcal{N}(0,t)$ is Gaussian.

Inspired by the algebraic geometry method introduced by J. Huh,   GCMC is reformulated in the form of a log-convex sequence. In general, a completely monotone function can admit a log-convex sequence and a log-convex sequence can further induce a log-concave sequence.  The new formulation may guide GCMC to the marvelous temple of algebraic geometry.  Moreover,  to make GCMC more accessible to researchers from both information theory and mathematics\footnote{The author was not familiar with algebraic geometry.   The paper is also aimed at providing people outside information theory of necessary background on the history of GCMC in theory and application.},  together with some new findings,  a  thorough summary of the origin, the implication and further study on GCMC is presented.%


\end{abstract}

\begin{IEEEkeywords}
Completely monotone,  heat equation, heat flow, Gaussian distribution,  McKean's problem,  Gaussian interference channel, log-convexity,  log-concavity, algebraic variety,  Hodge theory
\end{IEEEkeywords}



\section{Introduction}
Gaussian distribution is at the heart of science and engineering,  as it manifests itself in the law of large numbers \cite{Feller68book}, Brownian motion \cite{Karlin81book},  maximum entropy principle \cite{Cover06book}  and several other fundamental results,  serving as the basic building block of information theory,  probability theory and statistics.  In partial differential equations and mathematical physics,  Gaussian distribution has an alias as ``heat kernel'' \cite{Stein03book} for it is the fundamental solution to heat equation.

Recall that,  the heat equation \cite{Evans2010book} has the following form:
 \begin{equation} \notag
u_t  - a \Delta u = 0,
\end{equation}
subject to appropriate initial and boundary conditions.  Here $t>0$, $x\in R^n$ and $a$ is a scalar which is usually equal to $1$. The unknown is $u: R^n\times [0,\infty]\to R$, $u=u(x,t)$,   and the Laplacian $\Delta$ is taken with respect to the spatial variables $x=(x_1, ..., x_n)$: $$\Delta u = \Delta_x u =  \sum_{i=1}^{n}u_{x_ix_i}.$$  When $a=1$,  its fundamental solution is

\[
\Phi(x,t) :=
\begin{cases}
     \frac{1}{(4\pi t)^{n/2}} e^{-\frac{|x|^2}{4t}}  & x  \in R^n,  t>0    \\
    0,              &  x \in R^n, t<0
\end{cases}
\]
If we further consider the initial value (or Cauchy) problem:
\[
\begin{cases}
   u_t-\Delta u   = 0  &  \text{in } R^n\times (0, \infty)    \\
    u                   =g             & \text{on } R^n \times \{t=0\}
\end{cases}
\]
then the solution of initial value problem is the convolution
$$u(x,t) =\int_{R^n} \Phi(x-y, t)g(y)dy = \frac{1}{(4\pi t)^{n/2}} \int_{R^n} e^{-\frac{|x-y|^2}{4t}} g(y) dy  $$

If we restrict to the case where $n=1$,
$$u(x,t)_{n=1} =  \frac{1}{(4\pi t)^{1/2}} \int_{R} e^{-\frac{|x-y|^2}{4t}} g(y) dy.$$
$u(x,t)_{n=1}$ has an practical interpretation  in information theory as it is precisely the probability density function of the received information in the Gaussian Channel.

In information theory \cite{Yeung08book},   Gaussian distribution with mean $\mu$ and  variance $\sigma^2$ is denoted by $\mathcal{N}(\mu, \sigma^2)$. Its probability density function is
$$\mathcal{N}(\mu,\sigma^2) = \frac{1}{\sqrt{2\pi\sigma^2}}e^{-\frac{(x-\mu)^2}{2\sigma^2}}$$
Thus, the heat kernel in dimension 1 can be rewritten as  $N(0,2t) = \sqrt{2t}\mathcal{N}(0,1)$.  If we further restrict $g$ to be a probability density function, then $u(x,t)$ will also be a probability density function.

In a real life communication scenario,  a sender  wants to send a piece of message, denoted by a random variable $X$,  to a receiver.  Due to the existence of environment noise $Z_t\sim\mathcal{N}(0,t)$,  the ``polluted'' message $Y_t$ obtained by the receiver is the sum of $X$ and $Z_t$; i.e.,
$$Y_t = X+Z_t$$
The probability density function $f(y,t)$ of $Y_t$  satisfies $u_t - a\Delta u=0$, where $a= 1/2$.  Thus, heat equation and Gaussian channel can be treated equivalently in essence. $f(y,t)$ is also referred to as the \textit{heat flow} in the literature.

Though information theory and mathematical physics shared the same interest in entropy  after the work of Boltzmann and Shannon, respectively.  The taste is quite divergent.
In mathematical physics,  heat equation is regarded to be quite well-studied and well-understood for hundred years and the attention has shifted to more complicated equations like Boltzmann equation and Landau damping \cite{Villani06book}.  In information theory,  to study a new channel or source model \cite{ElGamal11book},   the special case where the noise is Gaussian is usually discussed first.  The main reason is that, Gaussian distribution matches entropy quite well (e.g., maximum entropy principle) and researchers can obtain some initial results in closed-form which will shed light on the cases with arbitrary noises.  However,  for some complicated communication scenarios in multi-user information theory,  even Gaussian cases are hard to deal with; e.g., Gaussian interference channel \cite{Ahlswede74, Carleial78},    has been notoriously open for over 40 years.   Those long-standing open problems indicates that   a cornerstone of information theory on the Gaussian distribution is missing.


In his seminal 1948 work \cite{Shannon48}, Shannon introduced the entropy power inequality (Shannon EPI) to show that ``Gaussian is the worst additive noise''.  Extensive research had been conducted to explore and generalize Shannon EPI \cite{Stam59, Blachman65, Lieb78, Dembo91,  Villani00,  Rioul11}. Besides its applications in various channel and source coding problems,  Shannon EPI can  be  even used to prove the uncertainty principle and several other fundamental results in mathematical physics \cite{Dembo91}. Shannon EPI is far more than a common inequality in engineering but  a profound characterization of Gaussian distribution in the language of entropy.
Costa \cite{Costa85} in 1985 showed that $\exp(2h(Y_t))$ is concave, where the signs of the first and second derivatives of $h(Y_t)$ were obtained.

In an independent line of work in mathematical physics,  H.  P.  McKean in 1966 \cite{McKean66} asked  that in Boltzmann's $H$-Theorem,   whether the derivatives of $H(u)$ alternate in signs.  His question was finally disproved by E.  Lieb \cite{Lieb82} in 1982.  McKean also studied the higher order derivatives of $h(Y_t)$ with the motivation to generalize the maximal entropy principle to higher order derivatives of $h(Y_t)$.  He showed the signs of the first two derivatives but failed to further show the signs of the third and fourth derivatives.
Surprisingly, the question on heat equation was never studied afterwards and almost lost in the literature~\cite{Villani06book}.

Since 2013, Gaussian completely monotone conjecture (GCMC) \cite{Cheng13}-\cite{ Cheng19}  was introduced in information theory as a purely mathematical problem without any application.  GCMC states that $I(Y_t)$, the Fisher information of $Y_t$, is completely monotone in $t$; i.e.,  the derivatives of  $I(Y_t)$ alternate in signs.  It is almost the same as the question studied by McKean 1966 but stated a  slightly different manner.  However,  compared to McKean's 1966 work,  in \cite{Cheng13}-\cite{ Cheng15}, the signs of the third and fourth derivatives were strictly proved.  In 2019,  the application of GCMC on Gaussian interference channel was explained \cite{Cheng19}. Later, Ledoux, Nair and Wang \cite{Ledoux21} proved the log-convexity conjecture and showed its connection with Gaussian multiuser channels,  which has greatly encouraged the study of GCMC. GCMC may be the right tool to resolve those long-standing open problems with Gaussian noise.
The debut of GCMC also reminded \cite{Toscani15, Ledoux22} the mathematical physics researchers the lost work of McKean. Yet it seems quite tough to prove GCMC \cite{Anan18}.

 The ``gospel'' is from the study of certain log-concave conjectures on combinatorial problems.  J. Huh \cite{Huh14} adopted techniques from algebraic geometry to construct some special algebraic varieties which admit  a log-concave sequence in homology group.   A wide connection between log-concave sequence and Hodge theory \cite{Huh18, Huh22} was introduced to prove log-concave conjectures in a unified manner.
Coincidentally,  completely monotone functions also admit a log-concave characterization.    It will be intriguing to see whether Huh's method work in the setting of completely monotone functions.

The paper is organized as follows. In section II, the notation and terminology is introduced.  The main finding is introduced in section III. In section IV, V and VI, the origin,  the current progress and further implication are presented.

\section{Notation and terminology}
For a random variable $X$,  denote its alphabet (sample space) by $\mathcal{X}$.  If $|\mathcal{X}|$ is finite or infinitely countable,  $X$ is referred to as a discrete random variable \cite{Yeung08book}; otherwise,  it is a continua random variable.

For a discrete random variable $X$ with probability distribution $p(x)$,  its discrete entropy is define by
$$H(X):=\sum_{x\in\mathcal{X}}-p(x)\log p(x)$$
For example,  a binary random variable (Bernoulli distribution) takes values in $\{0, 1\}$ such that $p(X=0)=p$ and $p(X=1)=1-p$,  its entropy is $H(p)=-p\log p -(1-p)\log(1-p) $.


For a continuous random variable $X$ with probability density function (p.d.f) $f(x)$,  its differential entropy\footnote{Without loss of generality, the base of $h$ and $H$ are usually omitted unless otherwise stated.} is define by
$$h(X):=\int_{x}-f(x)\log f(x)\d x$$
The differential entropy of Gaussian distribution is $h(\mathcal{N}(\mu,\sigma^2)) = \frac{1}{2}\log 2\pi e \sigma^2.$

$H$ and $h$ have subtle difference and should not be misused; e.g., $H$ is always non-negative and $h$ can be negative.  For a more detailed treatment on $H$ and $h$,  the readers are referred to  \cite{Yeung08book}.

For ease of notation,
$$f_n := \frac{\d^n}{\d x^n}f(x)$$

\begin{Definition}[Fisher Information]
Let $X$ be a continuous random variable  with density function $f(x)$ and $f(x)$ is differentiable,
$$I(X) = \int \frac{f_x^2}{f}\d x$$
\end{Definition}
\begin{Theorem} [de Bruijn's Identity]
For $Y_t = X+Z_t$, where $X$ and $Z_t\sim \mathcal{N}(0,t)$ are independent,
$$\frac{\d}{\d t}h(Y_t) = \frac{1}{2} I(Y_t).$$
\end{Theorem}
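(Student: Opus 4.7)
The plan is to exploit the fact that the density $f(y,t)$ of $Y_t = X + Z_t$ is a heat flow, and then reduce the identity to an integration by parts. I will carry out the argument in four stages.

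First, I would verify that $f(y,t)$ satisfies the heat equation in the form dictated by the excerpt. Because $Z_t \sim \mathcal{N}(0,t)$ has variance $t$ (not $2t$), the density of $Y_t$ is the convolution
\begin{equation*}
f(y,t) = \int_{\mathbb{R}} f_X(x)\,\frac{1}{\sqrt{2\pi t}}\,e^{-(y-x)^2/(2t)}\,\d x,
\end{equation*}
and a direct computation (or a change of variables in the Gaussian kernel identified in the introduction) shows that $f_t = \tfrac{1}{2} f_{yy}$. This is precisely the $a=1/2$ version of the heat equation that the excerpt already flags as the information-theoretic normalization.

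Second, I would differentiate $h(Y_t) = -\int f \log f \,\d y$ in $t$. Commuting the derivative with the integral (justified in a third step) yields
\begin{equation*}
\frac{\d}{\d t} h(Y_t) = -\int \bigl(f_t \log f + f_t\bigr)\,\d y = -\int f_t \log f \,\d y,
\end{equation*}
where the term $\int f_t \,\d y = \frac{\d}{\d t}\int f \,\d y = 0$ drops out because $f(\cdot,t)$ is a probability density for every $t>0$. Substituting the heat equation and integrating by parts twice,
\begin{equation*}
-\int f_t \log f \,\d y = -\frac{1}{2}\int f_{yy} \log f \,\d y = \frac{1}{2}\int \frac{f_y^2}{f} \,\d y = \frac{1}{2} I(Y_t),
\end{equation*}
which is the desired identity, provided the boundary terms $[f_y \log f]_{-\infty}^{\infty}$ vanish.

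Third, I would handle the analytic fine print, which is where I expect the only real work. For any $t>0$ the convolution with the Gaussian kernel makes $f(\cdot,t)$ smooth with rapidly decaying derivatives of all orders, so $f_y,\,f_{yy}$, and $f_y^2/f$ are well-defined and integrable, and $f(y,t)$ together with $f_y(y,t)$ decays faster than any polynomial as $|y|\to\infty$ (while $\log f$ grows at worst quadratically), which kills the boundary terms. The same Gaussian smoothing gives uniform-in-$t$ (on compacts) dominating functions for $f_t \log f$, legitimizing the interchange of $\d/\d t$ with $\int\!\cdot\, \d y$ via dominated convergence. The main obstacle is precisely this step: writing down clean dominants that work for arbitrary $X$ (possibly without a density, or with unbounded support) requires a careful estimate of Gaussian tails, and this is the technical heart of the classical proof. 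Everything else reduces to the heat equation plus two integrations by parts.
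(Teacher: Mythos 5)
The paper states de Bruijn's identity without proof, treating it as a classical result, so there is no in-paper argument to compare against; your write-up supplies the standard proof (going back to Stam and Blachman) and it is essentially correct. The chain $f_t=\tfrac12 f_{yy}$, then $\frac{\d}{\d t}h(Y_t)=-\int f_t\log f\,\d y$, then integration by parts is exactly the classical route; one small slip is that only a \emph{single} integration by parts is needed, since $\int f_{yy}\log f\,\d y=\bigl[f_y\log f\bigr]_{-\infty}^{\infty}-\int \frac{f_y^2}{f}\,\d y$ already produces the Fisher information. You also correctly locate the only genuine technical content: justifying the interchange of $\frac{\d}{\d t}$ with the integral and the vanishing of the boundary term. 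To make that part airtight one normally imposes a moment condition on $X$ (e.g.\ finite variance), which both controls $\log f$ against the Gaussian tails and guarantees $I(Y_t)<\infty$ for $t>0$; with that hypothesis stated, your argument is the complete standard proof.
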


\section{Main Result}

First, we give the definitions of AM and CM functions \cite{Widder46}.
\begin{Definition}[Absolutely monotone function]
A function $f(t)$ is called absolutely monotone (AM) if  the signs of each order derivatives of $f(t)$ are  non-negative; i.e., for $n\geq 0$
$$\frac{\d^n}{\d t^n} f(t) \geq 0$$
\end{Definition}
For example, $e^{t}$ is AM in  $t$.
\begin{Definition}[Completely monotone function]
A function $f(t)$ is called complete monotone (CM) if  all the signs of the derivatives of $f(t)$ alternate in signs; i.e., for $n\geq 0$,
when $n$ is odd,
$$     \frac{\d^n}{\d t^n} f(t) \leq 0;$$
when $n$ is even
     $$\frac{\d^n}{\d t^n} f(t) \geq 0.$$
\end{Definition}
 Both $1/t$ and $e^{-t}$ are CM in $(0, +\infty)$.   AM and CM functions are commonly seen in toy examples.
 It is easy to see if  $f(t)$ is AM,  then $f(-t)$ is CM; and vice versa.  Though to conduct a derivative is machinery, it is  usually non-trivial  to strictly show the  sign.

\begin{Theorem}[Hausdorff-Bernstein-Widder theorem \cite{Widder46}]
A function f(t) is CM iff there exists a non-decreasing and bounded Borel measure $\mu(x)$ on $[0, +\infty)$ such that
$$f(t)=\int e^{-xt}d\mu(x). $$
\end{Theorem}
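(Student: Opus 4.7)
The theorem is an equivalence, so I would prove the two directions separately. The $(\Leftarrow)$ direction is immediate via differentiation under the integral sign: granted the representation, dominated convergence yields
$$f^{(n)}(t) = (-1)^n \int_{[0,\infty)} x^n e^{-xt}\, d\mu(x),$$
and since $x^n e^{-xt} \ge 0$ and $\mu$ is a nonnegative measure, $f^{(n)}(t)$ has sign exactly $(-1)^n$, establishing complete monotonicity.

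For the nontrivial $(\Rightarrow)$ direction I would follow Widder's strategy based on the Post--Widder inversion. First I would reduce to the case $f(0^+)<\infty$ by replacing $f$ by the translate $f(\cdot+\varepsilon)$ and sending $\varepsilon\downarrow 0$ at the end through a diagonal compactness argument. Motivated by the Post--Widder formula, which dictates that any density $g$ satisfying $f(t)=\int_0^\infty e^{-xt}g(x)\,dx$ must obey $g(x)=\lim_{n\to\infty}\tfrac{(-1)^n}{n!}(n/x)^{n+1}f^{(n)}(n/x)$, I would define approximating measures $\mu_n$ with densities
$$g_n(x) = \frac{(-1)^n}{n!}\left(\frac{n}{x}\right)^{n+1} f^{(n)}\!\left(\frac{n}{x}\right).$$
Complete monotonicity of $f$ gives $g_n\ge 0$, and the substitution $u=n/x$ followed by iterated integration by parts shows $\int_0^\infty g_n(x)\,dx = f(0^+)-f(\infty)$, so $\{\mu_n\}$ has uniformly bounded total mass. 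Helly's selection theorem, together with tightness deduced from the monotonicity of $f$ controlling the tails, then yields a subsequence $\mu_{n_k}$ converging weakly to a nonnegative bounded Borel measure $\mu$ on $[0,\infty)$.

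The closing step is to identify $f(t)=\int_{[0,\infty)}e^{-xt}\,d\mu(x)$ by passing to the limit in $\int e^{-xt}\,d\mu_n(x)$, and this is the principal obstacle. One must prove that, after the substitution $u=n/x$, the resulting smoothing kernel acting on $(-1)^n f^{(n)}(u)$ concentrates appropriately so that $\int_0^\infty e^{-xt}g_n(x)\,dx\to f(t)$ as $n\to\infty$; this requires a Laplace-type saddle-point asymptotic together with the uniform positivity estimates supplied by complete monotonicity of all orders of $f$. Once this convergence is in hand, weak convergence of $\mu_{n_k}$ against the bounded continuous test function $x\mapsto e^{-xt}$ (justified by tightness) yields the representation $f(t)=\int_{[0,\infty)}e^{-xt}\,d\mu(x)$, and uniqueness of $\mu$ follows from the injectivity of the Laplace transform on bounded Borel measures.
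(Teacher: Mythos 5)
The paper does not actually prove this theorem: it is quoted verbatim from Widder's 1946 book and used as a black box (the text only remarks that Widder's proof ``relies on only elementary calculus''). So there is no in-paper argument to compare yours against, and your proposal has to be judged on its own. Your $(\Leftarrow)$ direction is fine: for $t\geq\delta>0$ the integrand $x^n e^{-xt}$ is dominated uniformly, so differentiation under the integral sign gives $(-1)^n f^{(n)}(t)=\int x^n e^{-xt}\,d\mu(x)\geq 0$. The architecture of your $(\Rightarrow)$ direction (Post--Widder approximants $g_n$, the total-mass computation $\int_0^\infty g_n = f(0^+)-f(\infty)$ via $u=n/x$ and iterated integration by parts, Helly selection) is a legitimate and classical route, and the mass computation does check out provided you also prove the boundary lemmas $u^k f^{(k)}(u)\to 0$ as $u\to 0^+$ and $u\to\infty$, which follow from complete monotonicity plus boundedness but are not free.

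The genuine gap is the step you yourself flag as ``the principal obstacle'': the convergence $\int_0^\infty e^{-xt}g_n(x)\,dx\to f(t)$. As written this is an announcement, not a proof --- the phrase ``requires a Laplace-type saddle-point asymptotic together with the uniform positivity estimates'' is exactly the content of the hard direction, and nothing in your outline supplies it. Note also a logical wrinkle: the Post--Widder inversion formula you invoke as motivation is proved \emph{for functions already known to be Laplace transforms}, so you cannot borrow its conclusion here without recreating the estimate from scratch for a general CM function. If you want a route that avoids the saddle-point analysis entirely, Bernstein's Taylor-remainder argument (expand $f$ at a point $a$, observe all terms are nonnegative, and let $n,a\to\infty$ with $n/a\to\lambda$ so that $(1-t/a)^n\to e^{-\lambda t}$) or Widder's own reduction to the Hausdorff moment problem on the sequence $\{f(a+kh)\}_k$ are cleaner. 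Finally, be aware that the theorem as stated in the paper is itself slightly off: a bounded $\mu$ forces $f(0^+)=\mu([0,\infty))<\infty$, yet the paper's own example $1/t$ is CM on $(0,\infty)$ with representing measure equal to Lebesgue measure; your $\varepsilon$-translation trick does not repair this, since the masses $f(\varepsilon)$ of the translates blow up and Helly's theorem no longer applies uniformly in $\varepsilon$.
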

Noting that the integral above is precisely the Laplace transformation.  In the sequel,  the integral above is referred to as the \textit{Laplace representation}.  The proof \cite{Widder46} of the HBW theorem relies on only elementary calculus,  thus in principle, there is no significant gap between $f(t)$ and its Laplace representation.

\begin{Theorem}[Log-convexity]
If f(t) is CM, then $f(t)$ is log-convex in $t$.
\end{Theorem}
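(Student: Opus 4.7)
The plan is to derive log-convexity from the Laplace representation given by the Hausdorff-Bernstein-Widder theorem, via a Cauchy--Schwarz argument applied to the representing measure.

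First I would reduce the claim to an algebraic inequality. Log-convexity of $f$ on the region where $f>0$ is equivalent (under sufficient smoothness, which CM functions enjoy) to $(\log f)''\geq 0$, which in turn is equivalent to $f(t)f''(t)\geq (f'(t))^2$. So the goal becomes proving this pointwise inequality. The trivial case $f\equiv 0$ should be handled separately: since $f(t)=\int e^{-xt}\,\d\mu(x)$ with $\mu$ non-negative, if $f(t_0)=0$ at any interior point then $\mu\equiv 0$ and $f\equiv 0$, in which case log-convexity is vacuous.

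Next I would invoke the HBW theorem to write
\begin{equation*}
f(t)=\int_{0}^{\infty} e^{-xt}\,\d\mu(x).
\end{equation*}
Because $\mu$ is bounded and the exponential factor decays, differentiation under the integral is justified on $(0,\infty)$, yielding
\begin{equation*}
f'(t)=-\int_{0}^{\infty} x\,e^{-xt}\,\d\mu(x),\qquad f''(t)=\int_{0}^{\infty} x^{2}\,e^{-xt}\,\d\mu(x).
\end{equation*}
Then I would apply the Cauchy--Schwarz inequality to the non-negative measure $e^{-xt}\,\d\mu(x)$, writing $x\,e^{-xt}=\sqrt{e^{-xt}}\cdot x\sqrt{e^{-xt}}$, which gives
\begin{equation*}
(f'(t))^{2}=\left(\int_{0}^{\infty} x\,e^{-xt}\,\d\mu(x)\right)^{2}\leq \left(\int_{0}^{\infty} e^{-xt}\,\d\mu(x)\right)\left(\int_{0}^{\infty} x^{2}\,e^{-xt}\,\d\mu(x)\right)=f(t)f''(t),
\end{equation*}
which is exactly the desired inequality. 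This closes the proof.

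The only mildly delicate point, and what I would call the main obstacle, is a technical one rather than conceptual: justifying differentiation under the integral sign twice and ensuring all moments $\int x^{k}e^{-xt}\d\mu(x)$ are finite for $t>0$. This follows from the boundedness of $\mu$ and the fact that $x^{k}e^{-xt}$ is dominated, for any fixed $t>0$, by a constant multiple of $e^{-xt/2}$ for $x$ large, but it must be stated carefully. Once that is in place, the Cauchy--Schwarz step is the heart of the argument and the algebraic geometry flavor (a rank-one positivity between the $0$th, $1$st, and $2$nd moments) is already visible, foreshadowing the Hodge-type bilinear inequalities that motivate the paper's broader program.
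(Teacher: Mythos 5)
Your proof is correct and follows exactly the paper's own route: reduce log-convexity to $f_1^2\leq ff_2$ and apply Cauchy--Schwarz to the Laplace representation $f(t)=\int e^{-xt}\,\d\mu(x)$ furnished by the Hausdorff--Bernstein--Widder theorem. The extra care you take with differentiation under the integral and the degenerate case $f\equiv 0$ is a welcome refinement but does not change the argument.
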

To show $f(t)$ is log-convex,
$$\frac{\d^2}{\d t^2} \log f(t)\geq 0 \Leftrightarrow \frac{f_2f-f_1^2}{f^2}\geq 0 \Leftrightarrow f_1^2 \leq ff_2.$$
It can be proved by Cauchy-Schwartz inequality on the Laplace representation
$$ \left(\int x^2 e^{-xt}\d \mu(x)\right)  \left(\int e^{-xt}\d \mu(x)\right) \geq  \left(\int xe^{-xt}\d \mu(x)\right)^2$$

In the history, it was Hausdorff, Berstein and Widder first studied CM functions.  The 1946 book of Widder \cite{Widder46} is a very good repository on this topic.  The log-convexity theorem can also be found therein (p. 167).  Besides,  Widder also introduced the inequalities induced by the Hankel determinant.  Fink\footnote{Here is an error in my previous paper \cite{Cheng15} on the reference on the log-convexity theorem.} \cite{Fink82} defined a majorization  relation on CM functions and established a set of inequalities including the log-convexity.

%

In \cite{Cheng13}-\cite{ Cheng15},  I studied the following two conjectures.
\begin{Conjecture}[Gaussian Completely Monotone Conjecture]
$I(Y_t)$ is CM in $t$.
\end{Conjecture}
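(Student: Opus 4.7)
The plan is to attack GCMC via the Hausdorff-Bernstein-Widder theorem just stated: to show $I(Y_t)$ is CM in $t$ on $(0,\infty)$, it is enough (and necessary) to exhibit a bounded non-decreasing Borel measure $\mu$ on $[0,\infty)$ with $I(Y_t)=\int e^{-xt}\,\d\mu(x)$. By the classical Stieltjes moment theorem, existence of such a $\mu$ is in turn equivalent to the positive semi-definiteness of the two Hankel matrices built from the sequence $a_n(t_0):=(-1)^n \frac{\d^n}{\d t^n} I(Y_t)\big|_{t=t_0}$ at any fixed $t_0>0$. The Ledoux-Nair-Wang log-convexity $a_0 a_2 \geq a_1^2$ is then only the first, smallest rung of this infinite ladder; GCMC demands the entire ladder, uniformly in $t_0$.

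Concretely, I would first express each $a_n(t)$ as a spatial integral of a universal polynomial functional in $f(y,t)$ and its spatial derivatives $f_1,f_2,\ldots$, by iteratively differentiating $I(Y_t)=\int f_1^2/f\,\d y$ in $t$ and replacing every time derivative via the heat equation $f_t=\tfrac{1}{2}f_{yy}$. Then, following McKean for $n\leq 2$ and the author for $n=3,4$, I would try to rearrange the integrand by systematic integration by parts into a sum of squares divided by positive powers of $f$ --- a pointwise sum-of-squares certificate. An SOS certificate gives automatic sign control on $a_n$, and, more importantly, a natural route to Hankel positivity: Cauchy-Schwartz applied at the SOS level produces precisely the bilinear inequalities among different $a_i, a_j$ that the moment problem requires, mirroring the Cauchy-Schwartz argument sketched for the log-convexity theorem above.

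To reach all orders simultaneously, I would follow this paper's suggested route and attempt to realize $\{a_n(t)\}$ as the graded dimensions, or intersection numbers, of an algebraic variety or Lefschetz package attached to $f_X$, so that Hodge-Riemann bilinear relations impose Hankel positivity automatically --- in direct analogy with Huh's proofs of log-concavity for matroids and flag varieties. A possibly softer incarnation would identify $\mu$ with the spectral measure of a self-adjoint operator naturally associated to the heat semigroup acting on an $L^2$ space built from $f_X$, in which case the Laplace representation, and therefore CM, would become tautological.

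The hard part lies in this last step. Huh's Hodge structures come from manifestly combinatorial or geometric objects (matroids, Schubert varieties, convex bodies), whereas here the desired variety must depend functorially on an \emph{arbitrary} probability density $f_X$, and there is no indication yet what the correct construction should be. A tractable first target, and where I would actually begin, is to push the integration-by-parts/SOS approach through $n=5$ and $n=6$, both to stress-test CM on richer classes of $X$ and to search for combinatorial patterns in the SOS coefficients that might point toward the right Hodge-theoretic scaffolding.
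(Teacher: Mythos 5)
The statement you are addressing is labeled a \emph{Conjecture} in the paper, and the paper contains no proof of it: what it offers is partial evidence (the signs of $\frac{\p^n}{\p t^n}h(Y_t)$ for $n\leq 4$, obtained by sum-of-squares integrands; the log-convexity of $I(Y_t)$ proved by Ledoux, Nair and Wang) together with a \emph{reformulation} --- Lemma \ref{log-lemma}, that a CM function yields a log-convex derivative sequence --- and an explicitly open Question asking whether Huh's Hodge-theoretic machinery can be made to apply. Your submission is likewise a research program, not a proof: every step that would constitute actual mathematical progress (an SOS certificate for general $n$, Hankel positivity beyond the $2\times 2$ case, the construction of a variety or self-adjoint operator realizing $\mu$) is deferred, and you concede in your final paragraph that the key construction is unknown. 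So there is no completed argument to check against the paper; both texts stop at the same frontier.

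Two concrete cautions about the plan itself. First, your claimed equivalence between CM of $I(Y_t)$ and positive semi-definiteness of the two Hankel matrices of $a_n(t_0)=(-1)^n\frac{\d^n}{\d t^n}I(Y_t)\big|_{t_0}$ at a single fixed $t_0$ is not correct as stated: Hankel positivity at one point makes $\{a_n(t_0)\}$ a Stieltjes moment sequence of \emph{some} measure, but without analyticity in a right half-plane and a determinacy (or uniform-in-$t_0$) argument this does not reconstruct $I(Y_t)$ as $\int e^{-xt}\d\mu(x)$; the Hausdorff--Bernstein--Widder theorem requires the sign conditions at \emph{all} $t>0$, which is exactly the conjecture. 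The Hankel conditions are necessary, and strictly stronger than the proved log-convexity, so they are a reasonable ladder of intermediate targets --- but they are not a reformulation that closes the problem at a single $t_0$. Second, the SOS route you propose for $n=5,6$ is precisely the method the paper reports hitting a wall at $n=4$, where the coefficients already had to be found by numerical gradient descent and then rationalized; the paper (and the cited work of Zhang, Anantharam and Geng) indicates there is no known mechanism guaranteeing such certificates exist for all $n$, so this should be presented as an experiment, not as a step in a proof.
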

For $1\leq n\leq 4$, the signs of $\frac{\d^n}{\d t^n}h(Y_t)$ have been proved to be $+$, $-$, $+$, and $-$, respectively.
\begin{Conjecture}[Log-convexity conjecture\footnote{When I first noticed the existence of GCMC, I didn't know the pioneering work of Hausdorff-Bernstein-Widder on CM functions.   I just thought that if $I(Y_t)$ is CM, then its convexity could be strengthened.  }]
$I(Y_t)$ is log-convex in $t$.
\end{Conjecture}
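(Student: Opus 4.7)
The plan is to establish $I(Y_t) I''(Y_t) \geq I'(Y_t)^2$ directly, since proving the full GCMC and invoking Theorem 3 would be strictly harder. The starting point is that the density $f(y,t)$ of $Y_t$ satisfies the heat equation $f_t = \tfrac{1}{2} f_{yy}$, so each $t$-derivative of $I(Y_t) = \int f\, \phi_y^2\, \d y$ (with $\phi := \log f$) can be rewritten, via the heat equation and integration by parts, as a spatial integral of $f$ and its $y$-derivatives.

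A first round of computation should yield the clean identity $I'(Y_t) = -\int f\, \phi_{yy}^2\, \d y \leq 0$, where the quartic cross-terms $\int f\, \phi_y^2 \phi_{yy}\, \d y$ and $\int f\, \phi_y^4\, \d y$ collapse by the integration-by-parts relation $3\int f\, \phi_y^2 \phi_{yy}\, \d y = -\int f\, \phi_y^4\, \d y$. A second differentiation should then yield
\[
I''(Y_t) = \int f\, \phi_{yyy}^2\, \d y + R(t),
\]
where $R(t)$ collects indefinite-sign cubic and quartic correction terms. The core of the argument would be the Cauchy--Schwarz inequality with respect to the probability measure $f\, \d y$,
\[
\left(\int f\, \phi_y\, \phi_{yyy}\, \d y\right)^2 \leq \left(\int f\, \phi_y^2\, \d y\right)\left(\int f\, \phi_{yyy}^2\, \d y\right),
\]
combined with the integration-by-parts identity that expresses $\int f\, \phi_y\, \phi_{yyy}\, \d y$ in terms of $-I'(t) = \int f\, \phi_{yy}^2\, \d y$ plus further cubic residuals.

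The \emph{main obstacle} will be controlling the indefinite-sign cubic residuals (e.g.\ $\int f\, \phi_y\, \phi_{yy}^2\, \d y$ and $\int f\, \phi_{yy}^3\, \d y$) that appear in both $R(t)$ and in the integration-by-parts identity above. These are the same terms that obstructed McKean's attempt at the fourth derivative of $h(Y_t)$ and that required the delicate manipulations in \cite{Cheng13}--\cite{Cheng15}. If direct cancellation fails, a backup is the semigroup representation: the reverse-martingale identity $s_{Y_t}(Y_t) = \mathbb{E}[s_{Y_u}(Y_u) \mid Y_t]$ for $u<t$ writes $I(Y_t)$ as the squared $L^2$-norm of a conditional expectation of an earlier score. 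A conditional Cauchy--Schwarz inequality applied at three nested times $t_1<t_2<t_3$, combined with the tower property of these projections, may then yield $I(Y_{t_2})^2 \leq I(Y_{t_1})\, I(Y_{t_3})$ directly, bypassing the need to compute $I''$ explicitly and avoiding the indefinite cubic terms altogether.
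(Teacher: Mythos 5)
First, note that the paper itself gives no proof of this statement: it is stated as a \emph{conjecture}, with its resolution attributed to Ledoux, Nair and Wang \cite{Ledoux21}. So there is no in-paper argument to measure you against; your proposal has to stand on its own, and as written it does not close. It is a plan whose central difficulty is named but not resolved. Concretely: your Cauchy--Schwarz step gives $\left(\int f\,\phi_y\phi_{yyy}\,\d y\right)^2 \leq I(Y_t)\int f\,\phi_{yyy}^2\,\d y$, and integration by parts gives $\int f\,\phi_y\phi_{yyy}\,\d y = -\int f\,\phi_{yy}^2\,\d y - \int f\,\phi_y^2\phi_{yy}\,\d y = I'(t) + \tfrac13\int f\,\phi_y^4\,\d y$. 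Since $I'(t)\leq 0$ while $\tfrac13\int f\,\phi_y^4\,\d y\geq 0$, the squared cross term can be \emph{strictly smaller} than $I'(t)^2$, so the inequality $(I')^2 \leq I\cdot I''$ does not follow unless you simultaneously control the quartic correction and the residual $R(t)$ in $I'' = \int f\,\phi_{yyy}^2\,\d y + R(t)$ --- which is exactly the part you defer. (For calibration, this $R(t)$ is visible in the paper's expression for $\partial^3 h/\partial t^3$: after multiplying by $2$, $I'' = \int f\bigl(\tfrac{f_3}{f}-\tfrac{f_1f_2}{f^2}+\tfrac13\tfrac{f_1^3}{f^3}\bigr)^2\d y + \tfrac{2}{45}\int \tfrac{f_1^6}{f^5}\d y$; note the perfect square is not $\int f\,\phi_{yyy}^2\,\d y$ but a shifted version of it, which already signals that the bookkeeping of cubic and quartic terms is where all the work lies.)

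The backup route also fails as stated. With $\xi = \rho_{t_1}(Y_{t_1})$ and the backward-projection identity $\rho_t(Y_t)=\mathbb{E}[\rho_s(Y_s)\mid Y_t]$, you have $I(t_2)=\|\mathbb{E}[\xi\mid Y_{t_2}]\|_2^2=\mathbb{E}\bigl[\xi\,\mathbb{E}[\xi\mid Y_{t_2}]\bigr]$, and Cauchy--Schwarz only yields $I(t_2)\leq \sqrt{I(t_1)I(t_2)}$, i.e.\ monotonicity. To get $I(t_2)^2\leq I(t_1)I(t_3)$ you would need $\|\mathbb{E}[\xi\mid Y_{t_2}]\|^2\leq \|\xi\|\cdot\|\mathbb{E}[\xi\mid Y_{t_3}]\|$, but the tower property gives $\|\mathbb{E}[\xi\mid Y_{t_3}]\|\leq\|\mathbb{E}[\xi\mid Y_{t_2}]\|$, so conditional Cauchy--Schwarz pushes in the wrong direction; if this abstract argument worked it would prove log-convexity of $\|\mathbb{E}[\xi\mid\mathcal{F}_t]\|^2$ along arbitrary Markov filtrations, which is false. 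The genuine proof in \cite{Ledoux21} needs structure specific to the heat semigroup together with a careful treatment of the second-derivative expression. In short: you have correctly located the obstruction (the indefinite-sign cubic residuals), but neither of your two routes overcomes it, so the statement remains unproved by this proposal.
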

The log-convexity conjecture was recently derived from a different viewpoint and  proved in \cite{Ledoux21}, which provides a strong evidence to GCMC.

In the literature of CM functions, there is no systematic way to prove  a function is CM. Besides the proof by definition, to construct $\mu(x)$ to show the complete monotonicity could be found in the proof of BMV conjecture \cite{BMV75}. The work of J.  Huh provides us the possibility to prove complete monotonicity in a unified manner.

First I will introduce some necessary definitions.
\begin{Definition}
A sequence of real numbers $a_0$,  $a_1$,  . . . ,  $a_n$ is said to be log-concave if
$$a_i^2 \geq a_{i-1} a_{i+1}  \text{ for all i}$$
\end{Definition}

A very famous example about log-concave sequence is on chromatic polynomial in algebraic graph theory.
\begin{Definition}[Chromatic polynomial]
For a graph $G$ and $q$ different colors, the number of different colorings such that no adjacent nodes have the same color is denoted by $\chi_G(q)$; e.g.,
\begin{center}
$\chi_G(q)$ = (number of proper colorings of G using q colors),  $q \geq 1$.
\end{center}
\end{Definition}
\begin{Conjecture}[Ronald Read]
The coefficients of the chromatic polynomial form a log-concave sequence for any graph.
\end{Conjecture}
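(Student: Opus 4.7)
The plan is to recast Read's log-concavity as a Khovanskii--Teissier inequality for intersection numbers on a smooth projective variety built from $G$, and then invoke the Hodge index theorem. First I would reduce to the graphic matroid $M(G)$: writing $\chi_G(q)=q^{c(G)}\sum_{i=0}^{r}(-1)^{i}a_i\,q^{r-i}$ where $r$ is the rank of $M(G)$ and $c(G)$ the number of components, Whitney's sign theorem gives $a_i\geq 0$, so the problem becomes the log-concavity $a_i^{2}\geq a_{i-1}a_{i+1}$ of the unsigned characteristic-polynomial coefficients.

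Next I would geometrize the arrangement. Place $G$ on vertex set $\{0,\dots,n\}$, associate the graphic hyperplane arrangement $\mathcal{A}_G=\{z_i=z_j:\{i,j\}\in E(G)\}$, and consider the rational map
$$\varphi_G : \mathbb{P}^n \dashrightarrow \prod_{\{i,j\}\in E(G)} \mathbb{P}^1, \qquad [z_0{:}\cdots{:}z_n]\mapsto \bigl([z_i-z_j : z_0]\bigr)_{\{i,j\}\in E(G)}.$$
Let $X_G$ be the closure of the graph of $\varphi_G$ inside $\mathbb{P}^n\times\prod\mathbb{P}^1$; it is a smooth irreducible projective variety of dimension $n$, carrying two distinguished nef classes, $\alpha$ (pullback of a hyperplane from $\mathbb{P}^n$) and $\beta$ (pullback of the sum of hyperplane classes from $\prod\mathbb{P}^1$). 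The combinatorial core of the argument is the intersection-theoretic identification
$$\deg\bigl(\alpha^{n-i}\cdot\beta^{i}\bigr)=a_i,$$
obtained by analyzing how $\varphi_G$ extends across each stratum of $\mathcal{A}_G$ and collecting the contributions via M\"obius inversion on the intersection lattice of flats.

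Once that dictionary is in hand, the desired inequality is nothing but the Khovanskii--Teissier inequality
$$\deg(\alpha^{n-i}\beta^{i})^{2}\;\geq\;\deg(\alpha^{n-i+1}\beta^{i-1})\cdot\deg(\alpha^{n-i-1}\beta^{i+1}),$$
which follows from the Hodge index theorem applied to the algebraic surface cut out of $X_G$ by $n-2$ generic members of the pencil spanned by $\alpha$ and $\beta$; nefness of the two classes is automatic since both are pullbacks of very ample classes along morphisms to projective spaces.

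The hard part, and the real content of the argument, is the intersection-theoretic identification of the coefficients $a_i$ with the mixed degrees on $X_G$: it requires a careful bookkeeping of blow-up contributions along the boundary of $X_G$, governed by the M\"obius function of the lattice of flats of $M(G)$. Beyond graphic matroids --- the setting that is truly analogous to what this paper proposes for GCMC --- there is no ambient variety to work in, and one must instead construct a combinatorial Chow ring and prove for it, by hand, the full \emph{K\"ahler package}: Poincar\'e duality, the hard Lefschetz theorem, and the Hodge--Riemann bilinear relations. It is that abstract Hodge-theoretic step, rather than Khovanskii--Teissier itself, that was the decisive breakthrough and that one would hope to emulate for the complete monotonicity of $I(Y_t)$.
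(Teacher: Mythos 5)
First, a point of reference: the paper does not prove this statement. It records Read's conjecture, attributes its resolution to Huh, and quotes only a prose summary of the method (a graded algebra attached to a matroid satisfying Poincar\'e duality, hard Lefschetz, and the Hodge--Riemann relations). So there is no in-paper proof to compare against; your sketch has to be measured against Huh's published argument, and in outline it reproduces that argument faithfully: pass to the (unsigned) coefficients of the characteristic polynomial of the graphic matroid via Whitney's sign theorem, realize them as intersection numbers $\deg(\alpha^{n-i}\beta^{i})$ of two nef classes on a projective model of the arrangement complement, and conclude by the Khovanskii--Teissier inequality, i.e.\ the Hodge index theorem on a generic surface section. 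Your closing remarks about the general matroid case needing a combinatorial Chow ring and the full K\"ahler package are also accurate and match exactly the passage the paper quotes.

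That said, as a proof your text has a genuine hole, which you partly acknowledge: the identification $\deg(\alpha^{n-i}\beta^{i})=a_i$ is asserted, not established, and it \emph{is} the theorem --- nefness of pullbacks of ample classes and Khovanskii--Teissier are routine by comparison. Three cautions for a real write-up. (i) Your specific model (closure of the graph of a map to a product of $\mathbb{P}^1$'s indexed by edges) is not the one for which this identification is proved in the literature; Huh and Huh--Katz work with the gradient/reciprocal map into a single projective space $\mathbb{P}^{|E|-1}$ (or with the wonderful compactification), and both the claimed smoothness of the graph closure and the computation of the mixed degrees depend on that choice --- closures of graphs of rational maps are not smooth in general. (ii) What the intersection-theoretic argument delivers directly is log-concavity of the coefficients of the \emph{reduced} characteristic polynomial $\chi_{M(G)}(q)/(q-1)$; to recover the chromatic polynomial you must multiply back by $(q-1)$ and $q^{c(G)}$ and use that convolving a log-concave sequence with no internal zeros with $(1,1)$ preserves log-concavity. (iii) The M\"obius-inversion bookkeeping you defer is genuinely the bulk of the work. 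None of this is a wrong turn --- it is the correct route --- but the plan as written proves nothing yet.
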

In 2014,  J. Huh introduced a ground-breaking method on these conjectures by applying the techniques in  algebraic geometry; e.g., Hodge theory.   The general idea is  to construct some special algebraic variety such that the characteristics of the homology group is exactly the log-concave sequence in $\chi_G(q)$. A summary on his ideas \cite{Huh17} is quoted  as follows.



\begin{displayquote}
``We will discuss our work on establishing log-concavity of various combinatorial sequences, such as the coefficients of the chromatic polynomial of graphs and the
face numbers of matroid complexes. Our method is motivated by complex algebraic geometry, in particular Hodge theory. From a given combinatorial object $M$ (a matroid),
we construct a graded commutative algebra over the real number
$$A^*(M) = \bigoplus_{q=1}^d A^q(M)$$
which satisfies analogues of Poincar\'e duality, the hard Lefschetz theorem, and the Hodge-Riemann relations for the
cohomology of smooth projective varieties. Log-concavity
will be deduced from the Hodge-Riemann relations for
$M$. We believe that behind any log-concave sequence
that appears in nature there is such a Hodge structure
responsible for the log-concavity.''
\end{displayquote}

Coincidentally,  CM functions also admit a log-convex/log-concave characterization,  which means CM functions may be studied in the viewpoint of algebraic geometry.

\begin{Definition}
A sequence of real numbers $a_0$,  $a_1$,  . . . ,  $a_n$ is said to be log-convex if
$$a_i^2 \leq a_{i-1} a_{i+1}  \text{ for all i}$$
\end{Definition}

By $0 \leq a_i^2 \leq a_{i-1} a_{i+1}$,  $(\frac{1}{a_i})^2 \geq \frac{1}{a_{i-1}} \frac{1}{a_{i+1}}.$
\begin{Lemma}
If a sequence $\{a_i\}$ is log-convex, then $\{\frac{1}{a_i}\}$ is log-concave.
\end{Lemma}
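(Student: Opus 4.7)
The plan is essentially to invert the defining inequality and carefully track the direction of the inequality. Since the definitions of log-convexity and log-concavity presume the sequence takes values for which the squared and product expressions make sense as a meaningful inequality, the natural setting is $a_i > 0$ for all $i$; I would state this positivity assumption explicitly at the outset (it is implicit in the notion of log-convexity via $\log a_i$ being well-defined and convex).

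First I would start from the hypothesis $a_i^2 \leq a_{i-1}a_{i+1}$ for all $i$. Because every term is positive, the product $a_{i-1}a_i^2 a_{i+1}$ is strictly positive, so I can divide both sides of the inequality by $a_{i-1}a_i^2 a_{i+1}$ without reversing the inequality. This yields
\[
\frac{1}{a_{i-1}a_{i+1}} \leq \frac{1}{a_i^2},
\]
which can be rewritten as
\[
\left(\frac{1}{a_i}\right)^{\!2} \geq \frac{1}{a_{i-1}}\cdot\frac{1}{a_{i+1}}.
\]
Setting $b_i := 1/a_i$, this reads $b_i^2 \geq b_{i-1}b_{i+1}$ for all $i$, which is exactly the definition of log-concavity of the sequence $\{b_i\}$.

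There is essentially no obstacle here; the only subtlety worth flagging is the positivity hypothesis, since inverting an inequality requires positivity of the quantities being inverted, and log-convexity/log-concavity are meaningful only for positive sequences. If one wanted to emphasize the connection with the continuous case, one could alternatively observe that $\log(1/a_i) = -\log a_i$, so convexity of the map $i \mapsto \log a_i$ translates immediately to concavity of $i \mapsto \log(1/a_i)$; this gives a one-line conceptual proof that parallels the discrete algebraic manipulation above.
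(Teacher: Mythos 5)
Your proof is correct and follows essentially the same route as the paper, which simply inverts the inequality $0 \leq a_i^2 \leq a_{i-1}a_{i+1}$ to obtain $(1/a_i)^2 \geq (1/a_{i-1})(1/a_{i+1})$. Your explicit flagging of the positivity assumption is a reasonable refinement but does not change the substance of the argument.
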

The converse is not true as $(\frac{1}{a_i})^2 \geq \frac{1}{a_{i-1}} \frac{1}{a_{i+1}}$ cannot imply $a_i^2 \leq a_{i-1} a_{i+1}$.  There is a gap between log-convexity and log-concavity in general.


\begin{Lemma}\label{log-lemma}
If $f(t)$ is CM in $t$, then $\{f_i\}$ is a log-convex sequence.
\end{Lemma}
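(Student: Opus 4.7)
The plan is to mirror the proof of the earlier log-convexity theorem for $f(t)$ itself, but applied termwise to the sequence of derivatives. The key observation is that by the Hausdorff-Bernstein-Widder theorem (already stated in the excerpt), $f$ admits the Laplace representation $f(t)=\int e^{-xt}\d\mu(x)$ for some non-decreasing bounded Borel measure $\mu$ on $[0,+\infty)$. Because $e^{-xt}$ and all its $t$-derivatives are dominated (on any compact $t$-interval bounded away from $0$) by integrable majorants with respect to $d\mu$, differentiation under the integral is justified, and I would carry it out to obtain
\begin{equation*}
f_i(t)\;=\;\frac{\d^i}{\d t^i}f(t)\;=\;(-1)^i\int x^i\,e^{-xt}\,\d\mu(x).
\end{equation*}

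With that representation in hand, the log-convexity of the sequence $\{f_i\}$ reduces to a weighted Cauchy-Schwarz inequality. First I would note that the sign factors cancel in the product: $f_{i-1}f_{i+1}=(-1)^{2i}\int x^{i-1}e^{-xt}\d\mu(x)\int x^{i+1}e^{-xt}\d\mu(x)$, which is non-negative, and $f_i^2$ is trivially non-negative. Then, viewing $e^{-xt}\d\mu(x)$ as a positive measure on $[0,+\infty)$ and splitting $x^i=x^{(i-1)/2}\cdot x^{(i+1)/2}$, Cauchy-Schwarz gives
\begin{equation*}
\left(\int x^i\,e^{-xt}\,\d\mu(x)\right)^{2}\;\leq\;\left(\int x^{i-1}e^{-xt}\d\mu(x)\right)\left(\int x^{i+1}e^{-xt}\d\mu(x)\right),
\end{equation*}
which is exactly $f_i^2\leq f_{i-1}f_{i+1}$, i.e., log-convexity of $\{f_i\}$.

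There is no real obstacle here; this is essentially the same argument already displayed for the theorem ``If $f$ is CM, then $f$ is log-convex in $t$'', but indexed by the order of differentiation rather than by a continuous variable. The only minor care required is (i) verifying that the moments $\int x^{i}e^{-xt}\d\mu(x)$ are finite so that differentiation under the integral is legitimate — which follows from the HBW representation on $(0,\infty)$ together with the exponential factor $e^{-xt}$ — and (ii) observing the sign cancellation so that log-convexity is stated in terms of the raw $f_i$, not of $|f_i|$. Once these are noted, the proof is a one-line application of Cauchy-Schwarz to the Laplace representation.
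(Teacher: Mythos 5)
Your proposal is correct and coincides with the paper's own argument: the paper's proof of this lemma explicitly invokes the Cauchy--Schwarz inequality on the Laplace representation, displaying the very same inequality $\left(\int x^{i}e^{-xt}\d\mu(x)\right)^{2}\leq\left(\int x^{i-1}e^{-xt}\d\mu(x)\right)\left(\int x^{i+1}e^{-xt}\d\mu(x)\right)$ (alongside a shorter remark that $|f_n|$ is itself CM and hence log-convex). Your added care about differentiating under the integral and the sign cancellation is a welcome tightening of details the paper leaves implicit, but the route is the same.
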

\begin{proof}
A function $f$ is CM, then $f$ is always log-convex.
Furthermore,  if a  function $f$ is CM, then $|f_n|$ is also CM and $\log |f_n|$ is convex.  That is, $f_n^2\leq f_{n-1}f_{n+1}$.  Thus, $\{f_i\}$ is a log-convex sequence.

The result $f_n^2 \leq f_{n-1}f_{n+1}$  can also be proved by  applying Cauchy-Schwartz inequality on the Laplace representation of $f(t)$.
$$ \left(\int x^{n-1} e^{-xt}\d \mu(x)\right)  \left(\int x^{n+1}e^{-xt}\d \mu(x)\right) \geq  \left(\int x^{n}e^{-xt}\d \mu(x)\right)^2$$
\end{proof}

The converse is not always true.  However, if $\{f_i\}$ is a log-convex sequence,  then $f_if_{i+2}\geq 0$ always holds; i.e., the complete monotonicity of $f(t)$ is decided by the signs of $f$ and $f_1$,  which are easier to check than $f_n$.


\begin{Question}
Can we use the algebraic geometry method of J.  Huh to construct CM functions systematically?
\end{Question}
Though algebraic geometry method has a long history in coding theory,  in Shannon theory (channel/source coding problems),  it is unprecedented to make use of tools in algebraic geometry.  Yet, in light of J.  Huh's ground-breaking work, it is very tempting to make such an adventure.
Personally speaking, CM is from analysis but is is indeed algebraic as it is a certain invariant. For GCMC, the invariant is an important characterization of Gaussian distribution. Indeed, heat equation  describes a very fundamental physical phenomena in nature.  


 \section{The origin of GCMC}
There are widely  correspondences between binary random variable and Gaussian random variable in the literature of Shannon information theory.  Similar to Gaussian random noise, if the noise in a source or channel coding problem is Bernoulli distribution,  then usually an expression in  closed-form can also be obtained.  Even the the proofs can be obtained by the same arguments (e.g., superposition coding).   Some of the correspondences are listed as follows.
\begin{center}
\begin{tabular}{ | l | l| }
  \hline
  Binary symmetric channel & Gaussian channel  \\
  \hline
  Binary broadcast channel & Gaussian broadcast channel  \\
  \hline
  Binary MAC & Gaussian MAC \\
  \hline
  Binary Slepian-wolf coding &   Gaussian Slepian-wolf coding \\
  \hline
\end{tabular}
\end{center}
The should be a theory to explain the underlying interplay between binary random variable and Gaussian random variable.

In most situation, to simplify the communication model,  information theory study usually assumes that the size of the alphabet of the random variables are all binary.  In this setting, Mrs.  Gerber's Lemma (MGL) is the most widely used tools.

For $0\leq p,  q \leq 1$,  the convolution of $p$,  $q$ is
$$p*q:=p\times(1-q)+(1-p)\times q.$$

\begin{Lemma}[Mrs. Gerber's Lemma (MGL)]
$H(p*H^{-1}(x))$ is convex in $x$, for $0\leq p \leq 1$.
\end{Lemma}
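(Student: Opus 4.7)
The plan is to reduce the convexity of $\phi(x) := H(p * H^{-1}(x))$ to a one-variable inequality about an auxiliary function, and then to verify that inequality by concavity and endpoint checks. By the symmetries $H(q) = H(1-q)$ and $p*q = (1-p)*(1-q)$, I may assume $p \in [0,\tfrac{1}{2}]$ without loss of generality. I introduce $q := H^{-1}(x) \in [0,\tfrac{1}{2}]$ as the working parameter and $r := p*q = p + (1-2p)q \in [p,\tfrac{1}{2}]$; on these intervals $H$ is monotone, so the reparametrisation is safe.

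First I would compute $\phi'$ and $\phi''$ by the chain rule, using $H'(q) = \log\tfrac{1-q}{q}$ and $H''(q) = -\tfrac{1}{q(1-q)\ln 2}$. The first derivative simplifies to $\phi'(x) = (1-2p)\,H'(r)/H'(q)$. After one further differentiation and clearing manifestly positive factors, $\phi''(x)$ has the same sign as
\begin{equation*}
r(1-r)\,H'(r) \;-\; (1-2p)\, q(1-q)\, H'(q).
\end{equation*}
Introducing the auxiliary function $h(t) := t(1-t)\log\tfrac{1-t}{t}$, this is exactly $h(r) - (1-2p)\,h(q)$, so the lemma reduces to
\begin{equation*}
h(r) \;\geq\; (1-2p)\,h(q) \qquad \text{for all } q \in [0,\tfrac{1}{2}].
\end{equation*}

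For the reduced inequality I would fix $p$ and study $F(q) := h(r) - (1-2p)h(q)$ on $[0,\tfrac{1}{2}]$. Both endpoints are immediate from $h(\tfrac{1}{2}) = 0$: we get $F(\tfrac{1}{2}) = 0$ and $F(0) = h(p) \geq 0$. Since $dr/dq = 1-2p$, differentiation gives $F'(q) = (1-2p)\bigl(h'(r) - h'(q)\bigr)$. A direct calculation yields
\begin{equation*}
h''(q) \;=\; -2\log\tfrac{1-q}{q} \;-\; \frac{1-2q}{q(1-q)\ln 2},
\end{equation*}
both terms of which are non-positive on $[0,\tfrac{1}{2}]$. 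Hence $h$ is concave there, and since $r \geq q$ on this interval, $h'(r) \leq h'(q)$ and $F'(q) \leq 0$. Thus $F$ is non-increasing with $F(\tfrac{1}{2}) = 0$, forcing $F \geq 0$ throughout, which is the desired inequality.

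The main obstacle is the cancellation inside $\phi''$: two a priori negative contributions (the concavity of $q \mapsto H(p*q)$ in $q$ and the negative curvature of $H$ used to invert) must combine to produce a non-negative total, and one has to commit to the right auxiliary function $h(t) = t(1-t)\log\tfrac{1-t}{t}$ to recognise what actually survives after the algebra. Once $h$ is in hand, its concavity on $[0,\tfrac{1}{2}]$ together with the boundary value $h(\tfrac{1}{2}) = 0$ finishes the argument cleanly.
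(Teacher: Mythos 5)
Your proof is correct. The reduction of the sign of $\phi''$ to $r(1-r)H'(r) - (1-2p)\,q(1-q)H'(q)$ checks out (after clearing the positive factor $(1-2p)/H'(q)^3$ and the factor $q(1-q)r(1-r)\ln 2$), the endpoint values $F(0)=h(p)\ge 0$ and $F(\tfrac12)=0$ are right, $h''\le 0$ on $[0,\tfrac12]$ holds as you compute, and $r=p+(1-2p)q\ge q$ with $r\le\tfrac12$ on that range, so the monotonicity argument closes the proof. (Minor quibble: of the two contributions to $\phi''$, the one coming from $q''=-H''(q)/H'(q)^3$ is actually positive, not negative, but this only affects your motivational remark, not the argument.)

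Your route is genuinely different from the paper's. The paper states MGL without proof and instead sketches an argument for the generalized version GMGL ($H(p*g(x))$ is convex in $x$ iff $H(g(x))$ is): differentiate twice in $x$, then show the resulting expression is \emph{concave in $p$}, so that convexity need only be verified at the extreme points $p=0$ and $p=\tfrac12$; MGL then follows since $g=H^{-1}$ gives $H(g(x))=x$, trivially convex, and $p=\tfrac12$ gives a constant. Your proof instead fixes $p$ and runs the whole argument in the $q$-variable, trading the concavity-in-$p$ lemma for the concavity of the single auxiliary function $h(t)=t(1-t)\log\tfrac{1-t}{t}$ on $[0,\tfrac12]$. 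What the paper's approach buys is generality (it proves GMGL, of which MGL is a corollary) and a conceptual explanation of why only the endpoints in $p$ matter; what yours buys is a short, fully self-contained, and easily checkable proof of MGL itself, with all the analysis concentrated in one explicit second-derivative computation.
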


The study of GCMC originated from the study of Mrs.  Gerber's Lemma.   As $H(p*H^{-1}(x))$ is symmetric in $p=1/2$, we could assume $0\leq p\leq 1/2$.  If we look at the shape of  $H(p*H^{-1}(x))$ from $p$,  we can find that as $p$ goes from $0$ to $1/2$,  there is a deformation on  $H(p*H^{-1}(x))$ from $H(p*H^{-1}(x))$ = $x$ to $H(p*H^{-1}(x))$=$1$.  During the process of deformation, the convexity has been remained.  As binary entropy function is a very ``nice'' function,  one may wonder whether the convexities at $p=0$ and $p=1/2$ suffice to enforce the convexity in  $p\in [0, 1/2]$.  The idea was finally rigorously coined in the following result.


\begin{Theorem} [GMGL] \footnote{C. Nair suggested the study on the convexity of general form  $H(p*g(x))$.}
$H(p*g(x))$, $0 \leq p \leq 1$ is convex in $x$ if and only if $H(g(x))$ is convex in $x$.
\end{Theorem}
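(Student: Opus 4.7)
I would split the equivalence into an essentially trivial direction and a substantive one.

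\emph{Necessity ($\Rightarrow$).} Assume $H(p*g(x))$ is convex in $x$ for every $p\in[0,1]$. Specializing to $p=0$ and noting that $0*q=q$ immediately gives that $H(g(x))$ is convex, so this direction is free.

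\emph{Sufficiency ($\Leftarrow$).} Assume $F(x):=H(g(x))$ is convex in $x$, fix $p\in[0,1]$, and aim to show $H(p*g(x))$ is convex. My plan has three steps. First, I would use two symmetries to reduce to the case $p,\,g(x)\in[0,1/2]$: a direct computation gives $p*(1-q)=1-p*q$, which together with $H(1-r)=H(r)$ shows that replacing $g(x)$ pointwise by $\tilde g(x):=\min\{g(x),\,1-g(x)\}\in[0,1/2]$ leaves both $H(g(x))$ and $H(p*g(x))$ unchanged; similarly $H(p*q)=H((1-p)*q)$ lets me assume $p\in[0,1/2]$. Second, with $g(x)\in[0,1/2]$, let $H^{-1}$ denote the inverse of $H$ restricted to $[0,1/2]$, so that $g(x)=H^{-1}(F(x))$ and
\[
H(p*g(x))=\psi(F(x)),\qquad \psi(u):=H\bigl(p*H^{-1}(u)\bigr),\ u\in[0,1].
\]
Third, I would invoke the classical MGL, which is precisely the assertion that $\psi$ is convex on $[0,1]$, and then check monotonicity: for $p\in[0,1/2]$, $\psi$ is non-decreasing because it is the composition of the non-decreasing maps $u\mapsto H^{-1}(u)$ (from $[0,1]$ to $[0,1/2]$), $q\mapsto p*q=p+(1-2p)q$ (on $[0,1/2]$, since $1-2p\geq 0$, with image in $[p,1/2]\subset[0,1/2]$), and $r\mapsto H(r)$ on $[0,1/2]$. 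Since $F$ is convex and $\psi$ is convex and non-decreasing, the composition $\psi\circ F=H(p*g(\cdot))$ is convex.

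\emph{Main obstacle.} All the real content is carried by MGL as a black box; without it, establishing convexity of $\psi$ itself requires the non-trivial original second-derivative calculation for the binary entropy, and that is where the actual work lies. The only delicate point in the reduction above is that the folded function $\tilde g$ need not inherit smoothness from $g$, but this is harmless because convexity is a property of the scalar function $x\mapsto H(p*g(x))$ itself, and the symmetries show that the pre- and post-folding versions of this function are literally identical.
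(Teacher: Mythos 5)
Your argument is correct, but it takes a genuinely different route from the paper. The paper's proof (given there only as a sketch) is a direct calculus argument: compute $\frac{\partial^2}{\partial x^2}H(p*g(x))$, show that this quantity is \emph{concave in $p$}, and conclude that its minimum over $p$ is attained at the endpoints $p=0$ (where it is the second derivative of $H(g(x))$, nonnegative by hypothesis) and $p=1/2$ (where $H(p*g(x))\equiv 1$); this simultaneously yields the ``only if'' direction and, as a byproduct, reproves the classical MGL, since MGL is the special case $g=H^{-1}$. You instead take MGL as a black box and reduce GMGL to it: after folding $g$ into $[0,1/2]$ via the identities $p*(1-q)=1-p*q$ and $H(1-r)=H(r)$, and restricting to $p\in[0,1/2]$ via $H((1-p)*q)=H(p*q)$, you write $H(p*g(x))=\psi(H(g(x)))$ with $\psi(u)=H(p*H^{-1}(u))$, invoke MGL for convexity of $\psi$, verify $\psi$ is non-decreasing, and apply the composition rule for a non-decreasing convex function of a convex function. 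Both directions of the equivalence check out (the ``only if'' via $p=0$ is immediate in either approach). What your route buys is economy: no second-derivative computation is needed beyond what is already inside MGL, and the reduction is clean and rigorous, including your correct observation that the folding $\tilde g=\min\{g,1-g\}$ is harmless because the scalar functions before and after folding coincide identically. What the paper's route buys is self-containedness and structural insight --- the concavity in $p$ of the second derivative is the new observation that explains \emph{why} convexity at the two extreme values of $p$ propagates to all $p$ --- whereas your argument would be circular if one's goal were to prove MGL itself.
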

The general idea of the proof  is first to take the 2nd derivative with respect to $x$ of $H(p*g(x))$. Then further take the 2nd derivative with respect to $p$, and show that the expression obtained is $\leq 0$; i.e., it is concave in $p$.  By this mean,  GMGL holds if and only if it holds at  $p=0$ and $p=1/2$.

When GMGL is proved,  a question naturally arose from the  correspondence between   binary random variable and Gaussian random variable: what is the counterpart in Gaussian distribution?
This leads to the debut of GCMC.

\section{Gaussian completely monotone conjecture}

Recall that Gaussian channel was first introduced in  Shannon 1948 \cite{Shannon48}: $X$ is the information to send,  the independent noise $Z_t=\mathcal{N}(0,t)$ is  Gaussian,
$$Y_t = X+ Z_t.$$
As we need to restrict the power assumption of the system, there is a power constraint on $X$
$$\e  X^2 \leq P$$
The channel capacity is given by
$$C(S) = \frac{1}{2}\log  (1+\frac{P}{t}) $$

Shannon further asserted that Gaussian  noise is  the worst additive noise  when the variance of the noise is given.   That is, for a general additive channel $Y = X + Z$, where the maximum variance of  $Z$ is given,   the channel capacity is minimized iff $Z$ is Gaussian.  To prove it, Shannon introduced the famous entropy power inequality.

\begin{Theorem}[Shannon Entropy Power Inequality (Shannon EPI)]
For any two independent continuous random variables $X$ and $Y$,
$$e^{2h(X+Y)}\geq e^{2h(X)}+e^{2h(Y)}.$$
The equality holds iff  both $X$ and $Y$ are Gaussian.
\end{Theorem}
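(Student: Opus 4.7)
The plan is to carry out Stam's classical proof, which is the natural route given the Fisher information and de Bruijn machinery already set up in the excerpt. The argument has two ingredients: the Fisher Information Inequality (FII) and a Gaussian perturbation / heat-flow comparison of the entropy power.

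First I would establish the FII: for independent $X, Y$ with smooth densities,
$$\frac{1}{I(X+Y)}\;\geq\;\frac{1}{I(X)}+\frac{1}{I(Y)}.$$
Letting $\rho_U:=(\log f_U)'$ denote the score, the convolution formula $f_{X+Y}(s)=\int f_X(s-y)f_Y(y)\,\d y$ together with integration by parts gives $\rho_{X+Y}(X+Y)=\e[\rho_X(X)\mid X+Y]=\e[\rho_Y(Y)\mid X+Y]$. Hence for every $\lambda\in[0,1]$, $\rho_{X+Y}(X+Y)=\e[\lambda\rho_X(X)+(1-\lambda)\rho_Y(Y)\mid X+Y]$. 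Applying conditional Jensen, then using $\e[\rho_X(X)]=\e[\rho_Y(Y)]=0$ and the independence of $X$ and $Y$, yields $I(X+Y)\leq\lambda^2 I(X)+(1-\lambda)^2 I(Y)$; optimizing $\lambda=I(Y)/(I(X)+I(Y))$ gives the FII.

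Second, I would set up a Gaussian-perturbed family. Let $Z, Z'$ be independent standard Gaussians, independent of $(X,Y)$, and for $t\geq 0$ and a parameter $\alpha>0$ set $X_t:=X+\sqrt{t}\,Z$ and $Y_t:=Y+\sqrt{\alpha t}\,Z'$, so that $X_t+Y_t$ is distributed as $X+Y+\sqrt{(1+\alpha)t}\,Z''$ for an independent standard Gaussian $Z''$. Writing the entropy power $N(U):=(2\pi e)^{-1}\exp(2h(U))$, the EPI is equivalent to $N(X+Y)\geq N(X)+N(Y)$. For the deficit $\Phi(t):=N(X_t)+N(Y_t)-N(X_t+Y_t)$, de Bruijn's identity gives
$$\Phi'(t)=N(X_t)\,I(X_t)+\alpha\, N(Y_t)\,I(Y_t)-(1+\alpha)\,N(X_t+Y_t)\,I(X_t+Y_t),$$
and a direct computation shows $\Phi(t)\to 0$ as $t\to\infty$, since each $N(\cdot)$ is asymptotic to the variance of its Gaussian component.

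The main obstacle is the third step: proving $\Phi'(t)\geq 0$, so that $\Phi(0)\leq\Phi(\infty)=0$ delivers EPI. A naive application of FII only upper-bounds $I(X_t+Y_t)$, which is the wrong direction for the term appearing in $\Phi'$. The standard remedy is to choose $\alpha=\alpha(t)$ adaptively, matching it to the ratio $N(X_t)/N(Y_t)$ so that the Cauchy--Schwartz inequality underlying FII is tight in a weighted form, and then use the scaling $I(cU)=c^{-2}I(U)$ to rearrange $\Phi'(t)$ into a manifestly non-negative expression. Executing this rearrangement cleanly, together with the regularity needed to differentiate under the integral, is the crux. A secondary technicality is passing from smooth densities to general independent $X, Y$: one first convolves with an arbitrarily small Gaussian, rendering all Fisher informations finite and densities smooth, then takes the limit using lower semi-continuity of $h$. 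Finally, chasing the equality conditions through conditional Jensen and FII forces the scores $\rho_X$ and $\rho_Y$ to be affine, hence $X$ and $Y$ must both be Gaussian.
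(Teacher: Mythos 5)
The paper does not actually prove this theorem: Shannon EPI is stated as a classical result and attributed to \cite{Shannon48} and the subsequent literature (\cite{Stam59, Blachman65, Lieb78, Dembo91, Villani00, Rioul11}), so there is no in-paper proof to compare against. Judged on its own terms, your proposal correctly identifies Stam's route and the Fisher Information Inequality step is sound: the score identity $\rho_{X+Y}(X+Y)=\e[\lambda\rho_X(X)+(1-\lambda)\rho_Y(Y)\mid X+Y]$, conditional Jensen, and optimization over $\lambda$ do give $1/I(X+Y)\geq 1/I(X)+1/I(Y)$.

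The genuine gap is the third step, which you yourself flag as ``the crux'' and then do not carry out; as written it would not go through. Three concrete problems. First, your remark that FII bounds $I(X_t+Y_t)$ ``in the wrong direction'' is backwards: that term enters $\Phi'$ with a negative sign, so an upper bound on $I(X_t+Y_t)$ is exactly the useful direction. The real obstruction is different: after applying FII you are left needing $N(X_t)I(X_t)+\alpha N(Y_t)I(Y_t)\geq(1+\alpha)N(X_t+Y_t)\frac{I(X_t)I(Y_t)}{I(X_t)+I(Y_t)}$, and the factor $N(X_t+Y_t)$ is precisely the quantity the EPI controls, so bounding it is circular; this is why the classical proofs run the monotonicity argument on the entropy difference $h(X_t)+h(Y_t)-h(X_t+Y_t)$ with variance rates tuned proportional to $1/I(X_t)$ and $1/I(Y_t)$ (Stam--Blachman), or pass to Lieb's equivalent form $h(\sqrt{\lambda}X+\sqrt{1-\lambda}Y)\geq\lambda h(X)+(1-\lambda)h(Y)$, rather than differentiating entropy powers. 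Second, your proposed fix of letting $\alpha=\alpha(t)$ invalidates the displayed formula for $\Phi'(t)$, which was derived assuming $\alpha$ constant; with $Y_t=Y+\sqrt{\alpha(t)t}\,Z'$ the chain rule produces an extra $\alpha'(t)\,t$ term you have not accounted for. Third, the boundary condition needs more than ``$N(\cdot)$ is asymptotic to the variance of its Gaussian component'': $N(X_t)\sim t$ only yields $\Phi(t)=o(t)$, whereas you need the sharper expansion $N(X_t)=t+\mathrm{Var}(X)+o(1)$ (and its analogues) to conclude $\Phi(t)\to 0$. Until the monotonicity inequality is actually established for a correctly chosen functional, the argument is a plan rather than a proof.
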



As Gaussian distribution is the optimizer,  it is  natural to study $h(X+Z_t)$,  i.e., the Gaussian perturbation of $h(X)$. Its first order derivative was studied almost a century ago.
$$\frac{\partial }{\partial t} h(X+\sqrt{t}Z) =\frac12 I(Y_t),$$
where $I$ denotes the Fisher information aforementioned.

The second derivative of $h(Y_t)$ was obtained by H.  McKean in mathematical physics and Costa in information theory, respectively.

Let $f(y, t)$ be the probability density function of $Y_t$.
\begin{Theorem}
 $$\frac{\p^2 }{\p t^2}h(Y_t) = -\frac{1}{2}\int f \left(\frac{f_{yy}}{f}-\frac{f_y^2}{f^2}\right)^2\d y.$$
\end{Theorem}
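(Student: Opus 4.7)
Plan: The strategy is to apply de Bruijn's identity one more time, reducing the task to computing $\partial_t I(Y_t)$ and recognizing it as a perfect square. Since de Bruijn gives $\partial_t h(Y_t) = \tfrac12 I(Y_t)$, I have $\partial_t^2 h(Y_t) = \tfrac12 \partial_t I(Y_t)$, so the goal is to prove
$$\partial_t I(Y_t) \;=\; -\!\int f \Bigl(\tfrac{f_{yy}}{f} - \tfrac{f_y^2}{f^2}\Bigr)^{\!2} dy \;=\; -\!\int f\, L_{yy}^2\, dy,$$
where I set $L := \log f$ so that $L_{yy} = f_{yy}/f - f_y^2/f^2$. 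The $1/2$ prefactor in the claim will then come from the de Bruijn step.

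Next, I would record the parabolic equation satisfied by the density: because $Z_t\sim\mathcal N(0,t)$, convolution gives $f_t = \tfrac12 f_{yy}$, equivalently $L_t = \tfrac12(L_{yy}+L_y^2)$. Using $I(Y_t)=\int f L_y^2\,dy$, I differentiate under the integral sign,
$$\partial_t I \;=\; \int f_t L_y^2\,dy + 2\!\int f L_y L_{yt}\,dy,$$
and substitute $f_t = \tfrac12 f(L_{yy}+L_y^2)$ together with $L_{yt} = \tfrac12(L_{yyy}+2L_y L_{yy})$ obtained by differentiating $L_t$ in $y$. This produces a linear combination of the four integrals $\int f L_y^4\,dy$, $\int f L_y^2 L_{yy}\,dy$, $\int f L_y L_{yyy}\,dy$ and $\int f L_{yy}^2\,dy$. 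Then I apply integration by parts, using $f_y = f L_y$, to derive the two auxiliary identities
$$\int f L_y^4\,dy \;=\; -3\!\int f L_y^2 L_{yy}\,dy, \qquad \int f L_y L_{yyy}\,dy \;=\; -\!\int f L_y^2 L_{yy}\,dy \;-\!\int f L_{yy}^2\,dy,$$
the first coming from $\int(f L_y^3)_y\,dy=0$, the second from integrating $L_{yyy}$ by parts against $f L_y$. Plugging these into the expression collapses every cross-term and leaves precisely $\partial_t I = -\int f L_{yy}^2\,dy$, as required. All boundary terms at $\pm\infty$ vanish because $f$, obtained by Gaussian smoothing, is smooth and super-polynomially decaying in $y$.

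The computation is conceptually straightforward, but the main obstacle is combinatorial bookkeeping: after substitution one has six terms with different rational coefficients, and the two IBP identities above must be applied with the correct multiplicities to cancel the cross-terms ($\int f L_y^4$ and $\int f L_y^2 L_{yy}$) and leave exactly $-\int f L_{yy}^2\,dy$. A small secondary issue is the justification of differentiation under the integral and of the IBP — standard, since convolution with a Gaussian renders $f$ analytic in $y$ with Gaussian-type tails, ensuring absolute integrability of each integrand and vanishing of every boundary contribution.
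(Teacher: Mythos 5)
Your proposal is correct. The paper itself gives no proof of this theorem --- it is stated as a known result of McKean and Costa --- and your computation is precisely the classical derivation used in those works: reduce to $\partial_t I(Y_t)$ via de Bruijn, use the heat equation $f_t=\tfrac12 f_{yy}$ (equivalently $L_t=\tfrac12(L_{yy}+L_y^2)$ for $L=\log f$), and integrate by parts. The bookkeeping you flag as the main obstacle does close: with $A=\int fL_y^2L_{yy}\,\d y$ and $B=\int fL_{yy}^2\,\d y$, substitution gives $\partial_t I=\tfrac52A+\tfrac12\int fL_y^4\,\d y+\int fL_yL_{yyy}\,\d y$, and your two identities turn this into $\bigl(\tfrac52-\tfrac32-1\bigr)A-B=-B$, so the cross-terms cancel exactly and $\partial_t^2 h(Y_t)=-\tfrac12\int fL_{yy}^2\,\d y$ as claimed. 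The regularity and decay justifications you sketch (Gaussian smoothing makes $f$ smooth with rapidly decaying derivatives, so differentiation under the integral and the vanishing of boundary terms are legitimate) are the standard ones and suffice.
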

 Recall that
$$f_n := \frac{\d^n}{\d y^n}f(y,t)$$
\begin{Theorem}
For  $t>0$,
\begin{align}
\frac{\p^3}{\p t^3}h(Y_t) 
&= \frac{1}{2}\int f\left( \frac{f_{3}}{f}-\frac{f_{1}f_{2}}{f^2}+\frac{1}{3}\frac{f_1^3}{f^3}\right)^2  +  \frac{f_1^6}{45f^5} \d y. \notag \label{eq06666}
\end{align}
This implies that $I(Y_t)$ is convex in $t$.
\end{Theorem}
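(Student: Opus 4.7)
The plan is to differentiate the formula for $\frac{\p^2}{\p t^2} h(Y_t)$ from the previous theorem one more time in $t$, use the heat equation $f_t = \frac{1}{2} f_{yy}$ to convert every $t$-derivative into a second-order $y$-derivative, and then integrate by parts repeatedly in $y$ until the integrand collapses into the claimed sum-of-squares form. Concretely, I would start from
$$\frac{\p^2}{\p t^2} h(Y_t) = -\frac{1}{2}\int f\left(\frac{f_{2}}{f} - \frac{f_1^{2}}{f^{2}}\right)^{2} \d y,$$
differentiate under the integral sign (justified for $t>0$ because $f(y,t)$ is smooth and rapidly decaying after Gaussian smoothing), and replace every $f_t$ that appears by $\tfrac{1}{2} f_{yy}$. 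The result is an integrand that is a rational function of $f, f_1, f_2, f_3, f_4$.

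Next I would apply integration by parts in $y$, using the decay of $f$ and its derivatives at $\pm\infty$ to discard boundary terms, in order to push all $f_4$ contributions onto the remaining rational factors. After simplification the expression reduces to $\int P(f, f_1, f_2, f_3)/f^{k}\, \d y$ for a specific polynomial $P$ and integer $k$. To keep the bookkeeping manageable I would introduce the normalized quantities $u = f_1/f$, $v = f_2/f$, $w = f_3/f$, so the integrand becomes $f \cdot Q(u,v,w)$ for some polynomial $Q$, and both the integration-by-parts identities and the eventual sum-of-squares identification become comparisons of polynomials in three scalar unknowns.

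The crux is then to recognize $Q(u,v,w)$ as the explicit decomposition
$$\frac{1}{2}\bigl( w - uv + \tfrac{1}{3}\, u^{3}\bigr)^{2} + \tfrac{1}{45}\, u^{6},$$
which I would verify by expanding the square and matching coefficients against the polynomial produced by the integration-by-parts stage. Since both summands are manifestly non-negative, the resulting identity gives $\frac{\p^3}{\p t^3} h(Y_t) \geq 0$, and de Bruijn's identity $\frac{\d}{\d t} h(Y_t) = \tfrac{1}{2} I(Y_t)$ then yields $\frac{\p^2}{\p t^2} I(Y_t) \geq 0$, i.e., convexity of $I(Y_t)$ in $t$.

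The main obstacle is the completion-of-squares step: there is no a priori reason the integrand should split as a single square of the Cauchy--Schwarz shape plus a positive monomial in $u$, and the precise coefficients $\tfrac{1}{3}$ and $\tfrac{1}{45}$ must emerge from the algebra rather than be guessed. A further nuisance is the sheer size of the intermediate polynomials, which invites arithmetic errors; organizing the calculation through the scaled derivatives $u,v,w$ and grading terms by total degree in these variables is the tactic I would rely on to keep everything under control. Technical regularity (differentiation under the integral and vanishing boundary terms) is standard for a Gaussian-smoothed density at positive time and only needs to be recorded.
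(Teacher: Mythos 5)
Your proposal matches the approach behind the paper's result: the paper states this theorem without an in-text proof (deferring to \cite{Cheng13}--\cite{Cheng15}), but the method it describes is exactly yours --- differentiate under the integral using $f_t=\tfrac12 f_{yy}$, integrate by parts in $y$ discarding boundary terms, and complete the square to obtain the non-negative decomposition, with the coefficients such as $\tfrac13$ and $\tfrac1{45}$ found by solving for a feasible sum-of-squares representation (by hand for this order, numerically for the fourth). Your identification of the completion-of-squares step as the crux, and the use of the normalized ratios $f_1/f, f_2/f, f_3/f$ to organize the bookkeeping, is consistent with how the cited proof proceeds.
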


\begin{Theorem}
For  $t>0$,
\begin{align}
&\frac{\p^4}{\p t^4}h(Y_t)\notag \\
&= -\frac{1}{2}\int f \left( \frac{f_4}{f} -\frac{6}{5} \frac{f_1f_3}{f^2} -\frac{7}{10} \frac{f_2^2}{f^2} +\frac{8}{5} \frac{f_1^2f_2}{f^3} -\frac{1}{2} \frac{f_1^4}{f^4} \right)^2 \notag \\
& \indent  + f \left(\frac{2}{5} \frac{f_1f_3}{f^2}  -\frac{1}{3} \frac{f_1^2f_2}{f^3} + \frac{9}{100} \frac{f_1^4}{f^4}\right)^2 \notag\\
& \indent  + f \left( -\frac{4}{100} \frac{f_1^2f_2}{f^3}  + \frac{4}{100} \frac{f_1^4}{f^4}\right)^2 \notag\\
& \indent  + \frac{1}{300}   \frac{f_2^4}{f^3} +\frac{56}{90000}  \frac{f_1^4f_2^2}{f^5} + \frac{13}{70000}  \frac{f_1^8}{f^7} \d y. \notag
\end{align}
\end{Theorem}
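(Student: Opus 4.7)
My plan is to start from the expression for $\frac{\partial^3}{\partial t^3} h(Y_t)$ given in the preceding theorem and differentiate once more in $t$, systematically converting every time derivative into a spatial derivative using the heat-flow PDE $f_t = \tfrac{1}{2} f_{yy}$ (which holds since $f(y,t)$ is the density of $Y_t = X + Z_t$ with $Z_t \sim \mathcal{N}(0,t)$). After applying the chain rule to the integrand, I obtain a linear combination of integrals of monomials in the logarithmic derivatives $f_k/f$ weighted by $f$, together with some integrals involving $f_5$ and $f_6$. The first bookkeeping step is to repeatedly integrate by parts to eliminate all derivatives of order higher than four, leaving only monomials built from $f_1, f_2, f_3, f_4$ and powers of $1/f$. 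Boundary terms vanish under mild decay assumptions on $f$ (which are automatic because $f$ is a convolution with a Gaussian).

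Once the integrand is in this canonical reduced form, the task becomes purely algebraic: exhibit a sum-of-squares (SOS) representation matching the target identity. I would proceed by ansatz. The leading square must carry the $f_4/f$ term; by matching the cross terms $\tfrac{f_4}{f}\cdot\tfrac{f_1 f_3}{f^2}$, $\tfrac{f_4}{f}\cdot\tfrac{f_2^2}{f^2}$, etc., the coefficients $6/5, 7/10, 8/5, 1/2$ in the first square are forced. Subtracting this principal square from the reduced integrand produces a residual polynomial that no longer involves $f_4$. For the residual, I would introduce a second square involving $f_1 f_3/f^2$, $f_1^2 f_2/f^3$, $f_1^4/f^4$, again matching cross-terms to determine coefficients. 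Iterating, a third square handles the remaining $f_1^2 f_2/f^3$ and $f_1^4/f^4$ interaction. Whatever is left over must, if the theorem is correct, reduce to a nonnegative combination of the pure monomials $f_2^4/f^3$, $f_1^4 f_2^2/f^5$, $f_1^8/f^7$ with coefficients $\tfrac{1}{300}$, $\tfrac{56}{90000}$, $\tfrac{13}{70000}$.

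The principal obstacle is the bookkeeping in step two, not the conceptual step. The number of distinct monomials appearing after integration by parts is large (one can classify them by their "weight" $\sum k \cdot (\text{exponent of }f_k)$ and total degree in the logarithmic derivatives), and every integration by parts creates several new terms. A disciplined approach is to maintain a table of integrals $\int f \cdot \prod_k (f_k/f)^{e_k}\,\d y$ and derive an explicit reduction relation expressing $\int f_5 \cdot (\cdots)$ and $\int f_6 \cdot (\cdots)$ in terms of the canonical monomials. I would carry out this reduction once, symbolically, verify it on a simple test case (e.g., pure Gaussian $f$), and then match coefficients using undetermined coefficients for the SOS ansatz. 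Consistency is the real check: if the SOS ansatz has fewer free parameters than the number of residual monomial coefficients that must vanish after subtraction, there is no reason a priori why the system should be solvable, so the fact that it closes is the content of the theorem.

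A secondary subtlety concerns validity of the integrations by parts and differentiation under the integral sign. Since $f(y,t)$ is smooth and rapidly decreasing in $y$ for every $t > 0$ (being a convolution with a Gaussian), all the Schwartz-class estimates needed to drop boundary terms and to exchange $\partial_t$ with $\int \d y$ hold. I would record this once at the beginning and then proceed formally throughout the calculation. The final identity then certifies $\frac{\partial^4}{\partial t^4} h(Y_t) \leq 0$, matching the expected alternating sign pattern predicted by GCMC.
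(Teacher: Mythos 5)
The paper itself contains no proof of this theorem: it only states the identity and reports (in Section V) that the fourth derivative was computed as an extension of the third, that the expression was too complicated to handle by hand, and that the coefficients were obtained by Y.~Geng through a numerical search (gradient descent over the space of feasible solutions) and then refined to rational form; the actual derivation lives in \cite{Cheng14b, Cheng15}. Your overall plan --- differentiate under the integral using $f_t=\tfrac12 f_{yy}$, integrate by parts to a canonical basis of monomials $\int f\prod_k (f_k/f)^{e_k}\,\d y$, then seek a sum-of-squares representation with undetermined coefficients --- is the same strategy as that work, so in spirit you are on the right track.

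The genuine gap is in your claim that the coefficients are ``forced'' by greedy cross-term matching. After integration by parts the canonical form is \emph{not} unique: there is a lattice of linear relations among the monomial integrals (every identity $\int \frac{\d}{\d y}(\cdot)\,\d y=0$ gives a syzygy), so the reduced integrand is only determined modulo this subspace, and correspondingly the SOS ansatz has a whole feasible region rather than a triangular system with a unique solution. Matching the $f_4$ cross-terms does pin down the first square, but the subsequent squares and, crucially, the requirement that the leftover residual be a \emph{nonnegative} combination of the even monomials $f_2^4/f^3$, $f_1^4f_2^2/f^5$, $f_1^8/f^7$ is a feasibility problem over that region. There is no a priori reason your iteration terminates with a nonnegative remainder --- that is precisely the content of the theorem, and in the original work it was resolved not by deterministic elimination but by numerically searching the feasible set and then rationalizing the answer. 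Your proposal, as written, would stall at the point where the residual after the first square fails to decompose greedily, unless you reintroduce the integration-by-parts freedom as additional unknowns; you should make that step explicit rather than asserting the coefficients are determined.
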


When $X\sim \mathcal{N}(0, a)$,  $h(Y_t) = \frac{1}{2}\log2\pi e (a+t)$ and $I(Y_t)=1/(t+a)$.    It is easy to see $I(Y_t)=1/(t+a)$ is CM in $t$.  Thus,  the signs of the above four derivatives guide us to the Gaussian completely monotone conjecture and the log-convexity conjecture of $I(Y_t)$.

One need to notice that it is machinery to derive any order derivative of a function, however, it is non-trivial to show the signs.
I obtained the third derivative solely by hand and it was first reported in HKU 2013 \cite{Cheng13},  as the evidence to convince the audience of the existence of  GCMC and log-convexity. Afterwards, the 4th derivative was calculated as an extension to 3rd derivative, but  the  expression is too complicated to obtain the coefficients by hand.
The  coefficients  were finally obtained by Y. Geng (the 2nd author in \cite{Cheng14b, Cheng15}) by numerically searching in  the space of feasible solutions via a matlab program implementing the gradient descent algorithm.  The numerical solution was then refined to the rational form above.

The original statement of McKean's problem in 1966 was stronger than GCMC as McKean asked whether Gaussian distribution could achieve the extremality in the derivatives when $n=1, 2, 3, 4, ...$. However, the gap between GCMC and McKean's problem on heat flow is negligible because it is well-known  that ``Gaussian maximize entropy''. If one could solve GCMC, then it will automatically turn to the Gaussian extremality. However, heat flow was not the major concern in McKean's paper.

%



%


\section{GCMC: further Implications}
\subsection{A Useful Test Tool for Gaussian Distribution}
To show the sign of 2nd derivative  of  $I(Y_t)$ is the key progress in the discovery of GCMC,  which has convinced the audience of the existence of GCMC.   The process to find such an expression is full of involved hard calculus. One should be bold enough to tolerate the panic along this way.  It will be very important to have a solid confidence on GCMC.   The secret behind my adventure is that I have the following test tool to verify a  conjecture on Gaussian distribution.  

The test tool is rather simple. Take a mixed Gaussian distribution $X \sim \lambda \mathcal{N}(\mu_1, 1) + (1-\lambda)\mathcal{N}(\mu_2, 1)$, where $\mu_1\leq \mu_2$ and $0 < \lambda < 1$. $X$ can be treated as the average of two Gaussian distributions at $\mu_1$ and $\mu_2$.  Let $d=\mu_2-\mu_1$.  When $d$ is very large, $X$ can be regarded as the union of two Gaussian distributions.  First, put $\mu_1$ very far way from $\mu_2$. When $t$ increases,  the Gaussian noise $Z_t$ will make the Gaussian distribution at  $\mu_1$ move close to  $\mu_2$. Then a counter-example may be discovered if  the proposed generalization of Shannon EPI is wrong.  By this approach,  I have disproved several possible generalizations. However, for GCMC, I numerically checked up to the 7th order and log-convexity, the numerical result is still true.   This is the most important secret tool compared to McKean in 1966.

%
%
%

\subsection{Application of GCMC}
As the motivation of GCMC is not application based,  it is always a challenge on the application of GCMC.  Also compared to the application of  Shannon EPI,  it is hard to directly apply GCMC in the same manner.  Intuitively, GCMC should be very useful by its nice form.  We can find the possible application via the Laplace representation of CM functions.
From the Laplace representation, we can see that $f(t)$ can be divided into two parts; $e^{-xt}$ is the completely monotone part and $\mu(x)$ is the identity of $f(t)$. Those two parts can be separated and  $e^{-xt}$ is the CM part which stopped us from a close observation on $I(Y_t)$.  Also, we can further see that $I(Y_t)$ can be recovered by its $\mu(x)$, which indicates that $I(Y_t)$ can be reversed, which is quite special as we knew in general the ``information'' of the system cannot be reversed by the second law of thermodynamics.

The separation of $e^{-xt}$ and $\mu(x)$ may lead to possible applications in network information theory.  In Shannon information, Shannon himself perfectly solved the channel capacity of discrete memoryless channel.  After that,  even for the channel capacity with one more extra node in the communication model,  we cannot find a satisfied answer till now.  The reason is that after the intermediate node, the information sent and the information obtained has been encoded by the transition probability inside the node.  So far we have no good theory to govern this  encoding process.
The separation of $e^{-xt}$ and $\mu(x)$ has provided such an opportunity  as it tells us that after the encoding,  the information can still be ``recovered'' by $\mu(x)$.  This indicates that for certain channel models with Gaussian noise,  e.g., Gaussian interference channel,  GCMC may have the possibility to help find the capacity region.

Amazingly,  recently,  M.  Ledoux, C. Nair and   Y.  Wang \cite{Ledoux21},  found a result on Gaussian multiuser channel which is equivalent to  the log-convexity of $I(Y_t)$. After further studied on the 2nd derivative expression on $I(Y_t)$, they finally proved it.  It is a very encouraging progress on both GCMC and Gaussian multiuser channels.

\section{Conclusion}

Though the idea introduced in this paper will face many challenges, it  will be a fruitful adventure to explore the relation between Shannon entropy and concepts in algebraic geometry. If it could be established one day, then there will be a fundamental change in the viewpoint of Shannon theory and the applications based on it.


\end{document}

%
%
%
%
%
%
%
%
%
%
%
%
%
%
%
%
%